\documentclass[reprint,secnumarabic,amssymb, nobibnotes, aps, prd, superscriptaddress, showkeys, showpacs, longbibliography]{revtex4-1}

\setlength{\textheight}{9.5in}

\usepackage{amsmath}
\usepackage{amssymb}
\usepackage{mathscinet}
\usepackage{amsfonts,amsmath}  
\usepackage{latexsym} 
\usepackage[utf8]{inputenc}
\usepackage{verbatim}
\usepackage{pb-diagram} 

\newtheorem{theorem}{Theorem}[section]

\newtheorem{corollary}[theorem]{Corollary}

\newtheorem{proposition}[theorem]{Proposition}

\newtheorem{example}[theorem]{Example}

\newenvironment{proof}{\textbf{Proof:}}{\hfill$\square$}

\newcommand{\bes}{\begin{eqnarray*}}
\newcommand{\ees}{\end{eqnarray*}} 
\newcommand{\bpm}{\begin{pmatrix}}
\newcommand{\epm}{\end{pmatrix}}

\newcommand{\diag}{{\rm diag}}
\newcommand{\bra}[1]{\langle #1 |}    
\newcommand{\ket}[1]{| #1 \rangle}    
\newcommand{\bq}{\ket{q}}
\newcommand{\R}{{\mathbb R}}
\newcommand{\IR}{{\mathbb R}}
\newcommand{\IC}{{\mathbb C}}
\begin{document}

\title[Perturbation bounds for state transfer]
{Bounds on probability of state transfer \\ with respect to readout time and edge weight}

\author{Whitney Gordon}
\affiliation{Department of Mathematics \& Computer Science, Brandon University, Brandon,
 MB, Canada R7A 6A9}
\author{Steve Kirkland}
 \affiliation{Department of Mathematics, University of Manitoba, Winnipeg, MB, Canada  R3T 2N2}
\author{Chi-Kwong Li} 
\affiliation{Department of Mathematics, College of William and Mary,  Williamsburg, VA, USA  23187}
\author{Sarah Plosker}
\affiliation{Department of Mathematics \& Computer Science, Brandon University, Brandon,
 MB, Canada R7A 6A9}
 \affiliation{Department of Mathematics, University of Manitoba, Winnipeg, MB, Canada  R3T 2N2}
 \affiliation{ Winnipeg Institute for Theoretical Physics, Winnipeg, Manitoba}
\email{ploskers@brandonu.ca}
\author{Xiaohong Zhang}
 \affiliation{Department of Mathematics, University of Manitoba, Winnipeg, MB, Canada  R3T 2N2}
\date{January 21, 2016}%

\keywords{fidelity, transition probability, adjacency matrices, weighted graph, numerical range}
\pacs{02.10.Ud, 
02.10.Yn, 
03.65.Aa, 
03.67.Hk,  
03.67.Ac}  
\begin{abstract}
We analyse the sensitivity of a spin chain modelled by an undirected weighted connected graph exhibiting perfect state transfer to small perturbations in readout time and edge weight in order to obtain physically relevant bounds on the probability of state transfer. 
 At the heart of our analysis    is the concept of the numerical range of a matrix;  our analysis of   edge weight errors additionally makes use of the spectral and Frobenius norms. 

\end{abstract}
\maketitle
 
  \section{Introduction}
Transmitting a quantum state from one location to another is a critical task within a quantum computer. This task can be realised through the use of a  spin chain (1D magnet). The seminal work by Bose \cite{Bose} describes how a spin chain can be  used as a quantum data bus for quantum communication within a quantum computer. 
This leads to the notion of perfect state transfer (PST, described in more detail below), a desirable property for quantum communication.  
Although the spin chain considered by Bose minimizes the amount of physical and technological resources required to transfer quantum states, it only exhibits PST for $n\leq 3$, where   $n$ is the number of spins in the chain. A spin chain can  be represented by a graph, so many other types of graphs have been considered for quantum state transfer with $n>3$.

The \textit{fidelity} or \textit{probability of state transfer} is a measure of the closeness between two quantum states and is used to determine the accuracy of state transfer through a quantum data bus between quantum registers and/or processors. Fidelity is a number between 0 and 1; when the fidelity between two quantum states is equal to 1 we have \textit{perfect state transfer} (PST), and when the fidelity can be made arbitrarily close to 1 we have \textit{pretty good state transfer} (PGST). Many families of graphs been shown to exhibit PST \cite{CDEL, CDDEKL, BGS08, Ang09, Ang10,  K-Sev,  B13} or PGST \cite{GKSS12}.

In this work, we take a mathematical approach to perturbations which decrease the probability of state transfer.  While our approach is similar in nature to that found in \cite{Kay06} and in \cite{Steve2015}, it should be noted that other authors take different approaches. In particular, there have been a number of numerical studies investigating the robustness of fidelity with respect to perturbations (e.g.\ \cite{numerical1, numerical2, numerical3}), while the consideration of a disordered XY model leads to discussions of the appearance of Anderson localization (see \cite{Andloc}).  A recent paper \cite{Kay15} concerns the use of error correcting codes as a strategy for dealing with imperfections (in contrast, we do not consider encoding/decoding schemes herein).

 We consider state transfer probability as it applies to a weighted, undirected graph $G$, where vertices are labelled $1, \dots, n$ and the weight of the edge between vertices $j$ and $k$ is denoted $w(j,k)$. For any graph $G$, we consider its $n\times n$ adjacency matrix $A=[a_{jk}]$ defined via
 \begin{eqnarray*}
 a_{jk}= \begin{cases}
    w(j,k)      & \quad \text{if } j \text{ and } k\text{ are adjacent}\\
   0 & \quad\text{otherwise }\\
  \end{cases}
 \end{eqnarray*}
 as well as its Laplacian matrix $L=R-A$, where $R$ is the diagonal matrix of row sums of $A$. 
 
 Depending on the dynamics of our system, the Hamiltonian $H$, representing the total energy of our system, is taken to be either $A$ (in the case of $XX$ dynamics) or $L$ (in the case of Heisenberg ($XXX$) dynamics).  Its spectrum represents the possible measurement outcomes when one measures the total energy of the system. Here we are not making full use of the Hamiltonian in that we are taking a snapshot in time---neither $A$ nor $L$ depends on $t$. We account for time by setting $U(t)=e^{itH}$. The fidelity of transfer from vertex $s$ (sender) to vertex $r$ (receiver) is then  given by $p(t)=|u(t)_{sr}|^2$, where $u(t)_{sr}$ is the $(s,r)$-th entry of $U(t)=e^{itH}$. 
 
 Ideally, the fidelity is 1, representing perfect state transfer (PST) between the sender and receiver. 
  In \cite{Kay06}, the author discusses the very issue of tolerance of a spin chain with respect to timing errors and with respect to edge weight errors (so-called \emph{manufacturing} errors). For timing errors, he derives a simple lower bound based on the squared difference between each eigenvalue and the smallest eigenvalue, noting that a Hamiltonian with minimal eigenvalue spread would optimize the bound for small perturbations in readout time. The bounds that we produce (for both the  adjacency and Laplacian cases) look similar   and in fact extend the lower bound given in \cite{Kay06}.  Moreover, we give an example where our bound is attained for the adjacency matrix case, so it cannot be further improved in that setting.  Sensitivity with respect to perturbations in readout time is discussed in Section \ref{sec:time}.
  
  For manufacturing errors, again in \cite{Kay06} the author finds that distances  between eigenvalues are key, although no bound is given. 
  This sensitivity analysis was continued in \cite{Steve2015} through an analysis of the  derivatives of the fidelity of state transfer with respect to either readout time or a fixed $(j,k)$-th edge weight. Again it was noted that minimizing the spectral spread optimizes the bound on the fidelity of state transfer for small perturbations in time. No explicit bound with respect to perturbations in edge weight were given in  \cite{Steve2015}; the edge weight results were more qualitative in nature. Here, we take several different approaches to give bounds on  the probability of state transfer with respect to edge weight perturbation which involve both the spectral and Frobenius norms. 
 Sensitivity with respect to perturbations in edge weights is discussed in Section \ref{sec:edgeweight}.

 \section{Sensitivity with respect to readout time}\label{sec:time}
 Suppose we have PST between vertices $j$ and $k$ at time $t_0$. How sensitive is $p(t_0)$ to small changes in time? We would like $p(t_0+h)$ to be close to $p(t_0)$  for small $h$. 
 
 Let us fix some notation now. Let $\{\ket{1}, \ket{2}, \dots, \ket{n}\}$ denote  the standard ordered basis for $\IC^n$. Let $H$ be a real symmetric matrix that we decompose as $H=Q^T\Lambda Q$, where $\Lambda=\diag(\lambda_1, \dots, \lambda_n)$ is a (real) diagonal matrix of eigenvalues and $Q^T$ is an  orthogonal matrix of corresponding eigenvectors. If we have
 PST between vertices $j$ and $k$, we can permute the columns of  $Q^T$ so that we have PST between vertices 1 and 2. Thus we will always focus on the $(1,2)$ entry of our matrix for simplicity and ease of notation. Let
 $\ket{q_1} = (q_{11}, \dots, q_{n1})^T$ and 
$\ket{q_2} = (q_{12}, \dots, q_{n2})^T$ be the first two columns of $Q$; these vectors represent the first (and the second, respectively) entries of all the eigenvectors of $H$. 
 We then have $e^{it_0H}=Q^Te^{it_0\Lambda}Q$. We are assuming PST between vertices 1 and 2 at time $t_0$, and so
 \begin{eqnarray*}
 \left|\bra{q_1} e^{it_0\Lambda}\ket{q_2}\right|&=&1,\ees
 giving $e^{i\theta}\ket{q_1}=e^{it_0\Lambda}\ket{q_2}$ for some $\theta \in \R$, 
 where (obviously) $|e^{i\theta}|=1$. In particular, we have
 \bes
  \left|\bra{q_1} \tilde{M}e^{it_0\Lambda}\ket{q_2}\right|&=&\left|\bra{q_1}\tilde{M}e^{i\theta}\ket{q_1}\right|=\left|\bra{q_1}{M}\ket{q_1}\right|
 \ees 
 for any matrix $\tilde{M}$ (where we have set $M=\tilde{M}e^{i\theta}$). This innocuous observation will allow  us to consider the $(1,1)$ entry of $Q^TMQ$ rather than the $(1,2)$ entry of $Q^T\tilde{M}e^{it_0\Lambda}Q$ (see the proof of Theorem \ref{thm:time} below). 
 
   The change from the $(1,2)$ entry to the $(1,1)$ entry will forge the link between the sensitivity of the fidelity with respect to readout time and the notion of the \emph{numerical range} of an $n\times n$ matrix 
 $M$, defined by
 \bes
    W(M) = \left\{\bra{x}M\ket{x} \,:\,\ket{x}\in\mathbb{C}^n,\langle x|x\rangle=1\right\}.
\ees

 We now consider a small perturbation of readout time. The motivation for this is highly practical: even with lab equipment calibrated to an arbitrary amount of precision, if we want to readout at time $t_0$, the readout time in practice will be $t_0+h$ for small $h$ (e.g.\ $h=\pm 0.0001$).

\begin{theorem}\label{thm:time}
Let $H$ be either the adjacency matrix or the Laplacian associated with an undirected weighted 
connected graph with perfect state transfer at time $t_0$; that is, $p(t_0)=1$. Suppose there is 
a small perturbation and the readout time is instead $t_0+h$, where, denoting the smallest and 
largest eigenvalues of $H$ by $\lambda_{1}, \lambda_{n},$ respectively,  $h$ satisfies 
$|h| < \frac{\pi}{\lambda_{n}-\lambda_{1}}$. Then 
the fidelity at the perturbed time $t_0+h$  satisfies the following lower bound:
\[
 p(t_0+h)\geq  \frac14|e^{ih\lambda_1}+e^{ih\lambda_n}|^2.
\]
\end{theorem}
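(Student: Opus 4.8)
The plan is to reduce the perturbed fidelity to a statement about the numerical range of a single diagonal unitary, and then to solve a plane-geometry problem about the distance from the origin to a convex polygon inscribed in the unit circle. First I would use the factorization $e^{i(t_0+h)\Lambda}=e^{ih\Lambda}e^{it_0\Lambda}$ together with the PST relation $e^{it_0\Lambda}\ket{q_2}=e^{i\theta}\ket{q_1}$ recorded above (this is exactly the ``innocuous observation'' applied with $\tilde M=e^{ih\Lambda}$). This gives
\[
 p(t_0+h)=\left|\bra{q_1}e^{i(t_0+h)\Lambda}\ket{q_2}\right|^2=\left|\bra{q_1}e^{ih\Lambda}\ket{q_1}\right|^2,
\]
so the entire problem is now about the quantity $\bra{q_1}e^{ih\Lambda}\ket{q_1}$, which is an element of the numerical range $W(e^{ih\Lambda})$ of the diagonal unitary $e^{ih\Lambda}=\diag(e^{ih\lambda_1},\dots,e^{ih\lambda_n})$.

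Next I would exploit normality: since $e^{ih\Lambda}$ is diagonal, $W(e^{ih\Lambda})$ is the convex hull of the eigenvalues $e^{ih\lambda_1},\dots,e^{ih\lambda_n}$, i.e.\ a convex polygon inscribed in the unit circle. Writing $\bra{q_1}e^{ih\Lambda}\ket{q_1}=\sum_{j}q_{j1}^2\,e^{ih\lambda_j}$ exhibits it explicitly as a convex combination of these vertices (with weights $q_{j1}^2\ge 0$ summing to $1$, since $\ket{q_1}$ is a unit vector). Consequently it suffices to bound $\min_{z\in W(e^{ih\Lambda})}|z|^2$ from below, a bound that is then automatically valid whatever the eigenvector entries $q_{j1}$ happen to be.

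The geometric heart of the argument uses the hypothesis $|h|<\pi/(\lambda_n-\lambda_1)$. Since $\lambda_1\le\lambda_j\le\lambda_n$ for all $j$, the arguments $h\lambda_j$ all lie in an interval of length $|h|(\lambda_n-\lambda_1)<\pi$; hence every vertex $e^{ih\lambda_j}$ lies on an arc strictly shorter than a semicircle, whose two endpoints are $e^{ih\lambda_1}$ and $e^{ih\lambda_n}$. The chord joining these two extreme vertices cuts the disk into a minor segment containing all of the polygon and a major part containing the origin, so the origin lies outside the polygon and the closest point of the polygon to the origin is the midpoint of that chord. The perpendicular distance from the center of the unit circle to this chord is $\cos\!\big(|h|(\lambda_n-\lambda_1)/2\big)$, giving
\[
 p(t_0+h)=\left|\bra{q_1}e^{ih\Lambda}\ket{q_1}\right|^2\ge \cos^2\!\Big(\tfrac{h(\lambda_n-\lambda_1)}{2}\Big).
\]

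Finally I would convert this into the stated form with the identity $2+2\cos\alpha=4\cos^2(\alpha/2)$: since $|e^{ih\lambda_1}+e^{ih\lambda_n}|^2=2+2\cos\big(h(\lambda_n-\lambda_1)\big)=4\cos^2\big(h(\lambda_n-\lambda_1)/2\big)$, the lower bound is precisely $\tfrac14|e^{ih\lambda_1}+e^{ih\lambda_n}|^2$. I expect the main obstacle to be the geometric claim that the minimizer over the polygon is the midpoint of the extreme chord: one must verify that all intermediate vertices really do lie on the far side of that chord from the origin, and it is exactly here that the threshold $|h|<\pi/(\lambda_n-\lambda_1)$ is needed---once the arc reaches or exceeds a semicircle the origin can fall inside the hull and this clean bound fails.
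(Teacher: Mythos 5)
Your proposal is correct and follows essentially the same route as the paper: reduce $p(t_0+h)$ to $|\bra{q_1}M\ket{q_1}|^2$ with $M$ a diagonal unitary whose eigenvalues lie on an arc of length $|h|(\lambda_n-\lambda_1)<\pi$, then bound the distance from the origin to the convex hull (the numerical range) by the perpendicular distance to the chord joining the extreme eigenvalues. The paper merely packages your plane-geometry step algebraically, rotating by $e^{is}$ so the arc is symmetric about the real axis and taking the Hermitian part $M_1$ of $e^{is}M$, whose smallest eigenvalue $\cos\xi_n$ is exactly your chord distance $\cos\bigl(h(\lambda_n-\lambda_1)/2\bigr)=\tfrac12|e^{ih\lambda_1}+e^{ih\lambda_n}|$.
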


 \begin{proof}
 Let $D=t_0\Lambda=\diag(t_0\lambda_1, \dots, t_0\lambda_n)$ and  
 $\hat{D}=\diag((t_0+h)\lambda_1, \dots, (t_0+h)\lambda_n)$. 
 We find  
\bes
\bra{q_1}e^{i\hat{D}} \ket{q_2} &=& \bra{q_1} (e^{i\hat{D}}e^{-iD})  e^{iD} \ket{q_2}\\
&=& \bra{q_1} (e^{i\hat{D}}e^{-iD}) e^{i\theta} \ket{q_1} = \bra{q_1}M \ket{q_1} \in W(M)
\ees
with $M = \diag(e^{ih\lambda_1}, \dots, e^{ih\lambda_n}) e^{i\theta}$.  
Now, $W(M)$ is the convex hull of $\{e^{i\theta}e^{ih\lambda_1}, \dots, 
e^{i\theta}e^{ih\lambda_n}\}$.  Since $|h\lambda_n - h\lambda_1| < \pi,$ 
there exists an $s \in [0, 2\pi)$ such that
$e^{is}M$ has eigenvalues $e^{i\xi_1}, \dots, e^{i\xi_n}$ with 
$-\pi/2 < \xi_1 \le \cdots \le \xi_n <\pi/2$ with $\xi_1 = -\xi_n$.
Let $e^{is}M = M_1 + iM_2$ such that $M_1 = M_1^\dagger$ and $M_2 = M_2^\dagger$. 
Then $M_1$ has eigenvalues $\cos \xi_1, \dots, \cos \xi_n$ so that
$0 \le \cos \xi_1 = \cos \xi_n \le \cos \xi_j$ for all $j = 2, \dots, n-1$.
As a result, for every unit vector $\ket{q} \in \IC^n$, we have
\bes
|\bra{q} M \ket{q}| &=& |\bra{q}(M_1+iM_2)\ket{q}| \ge |\bra{q} M_1 \ket{q}|
\ge \cos \xi_1 \\
&=& |e^{i\xi_1} + e^{i\xi_n}|/2 = |e^{ih\lambda_1} + e^{ih\lambda_n}|/2.
\ees
Thus, every  point in $W(M)$ has a distance larger than 
$|e^{ih\lambda_1} + e^{ih\lambda_n}|/2$ from 0. 
Consequently, 
\begin{eqnarray*}
p(t_0)-p(t_0+h)&=&
|\bra{q_1} e^{iD} \ket{q_2}|^2 - |\bra{q_1} e^{i\hat D} \ket{q_2}|^2 
\\
&\leq & 1 -  \frac14|e^{ih\lambda_1} + e^{ih\lambda_n}|^2,
\end{eqnarray*}
and the result follows.
 \end{proof}

In fact, in the above proof, one can get a better estimate of  $|\bra{q_1} M \ket{q_1}|$ 
using the  information of $\ket{q_1} = (q_{11}, \dots, q_{n1})^T$
and $M = \diag(e^{ih\lambda_1}, \dots, e^{ih\lambda_n}) e^{i\theta}$; 
namely, for any $s \in \IR$, 
\begin{eqnarray*}
|\bra{q_1}M\ket{q_1}|
&=& \left|\sum_{j=1}^n q_{j1}^2 e^{ih(\lambda_j-s)}\right|
\ge \left|\sum_{j=1}^n q_{j1}^2\cos (h(\lambda_j-s))\right|\\
&\ge& \sum_{j=1}^n q_{j1}^2  - 
\frac{h^2}{2}  \sum_{j=1}^n q_{j_1}^2(\lambda_j-s)^2 \\
&=&  1 - 
\frac{h^2}{2} \sum_{j=1}^n q_{j_1}^2(\lambda_j-s)^2,  \\
\end{eqnarray*}
the second inequality following from the fact that $\cos(x) \ge 1-\frac{x^2}{2}$ for any $x \in \R .$ 

In particular, if we let $s = \lambda_1$ in the above, we obtain a result that is parallel to the bound in \cite{Kay06},  without that paper's extra hypotheses on the Hamiltonian. 

For general $s$, the above implies 
\begin{eqnarray*}
p(t_0)-p(t_0+h)&=&
1-|\bra{q_1}M\ket{q_1}|^2 \\
&=& (1+|\bra{q_1}M\ket{q_1}|)(1-|\bra{q_1}M\ket{q_1}|) \\
&\le& 2 \left(\frac{h^2}{2} \sum_{j=1}^n q_{j_1}^2(\lambda_j-s)^2 \right)\\
&=&h^2\sum_{j=1}^n q_{j_1}^2(\lambda_j-s)^2. \end{eqnarray*}
We summarize these derivations in the following theorem.

\begin{theorem}\label{thm:OurKay}
Let $H$ be either the adjacency matrix or the Laplacian associated to an undirected weighted 
connected graph with perfect state transfer at time $t_0$; that is, $p(t_0)=1$. Suppose there is 
a small perturbation and the readout time is instead $t_0+h$, where, for $\lambda_{1}\leq \cdots\leq  \lambda_{n}$,  $h$ satisfies 
$|h| < \frac{\pi}{\lambda_{n}-\lambda_{1}}$. Then, for any $s\in \R$, 
the transition probability for the perturbed time has the following lower bound:
\[
 p(t_0+h)\geq 1- h^2 \sum_{j=1}^n q_{j_1}^2(\lambda_j-s)^2.
\]
\end{theorem}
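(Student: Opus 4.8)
The plan is to reuse the reduction already established in the proof of Theorem \ref{thm:time}, which rewrites the perturbed fidelity purely in terms of the first column $\ket{q_1}$ of $Q$. First I would recall that the PST hypothesis gives $e^{i\theta}\ket{q_1} = e^{it_0\Lambda}\ket{q_2}$, so that $p(t_0+h) = |\bra{q_1} e^{i\hat D}\ket{q_2}|^2 = |\bra{q_1} M \ket{q_1}|^2$ with $M = \diag(e^{ih\lambda_1},\dots,e^{ih\lambda_n})e^{i\theta}$, exactly as in that proof. The whole problem then reduces to bounding $|\bra{q_1}M\ket{q_1}|$ from below, and everything that remains is a single chain of elementary inequalities.

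The key observation that produces the free parameter $s$ is a phase-freedom trick. Writing out the quadratic form diagonally gives $\bra{q_1} M\ket{q_1} = e^{i\theta}\sum_{j=1}^n q_{j1}^2 e^{ih\lambda_j}$; since the modulus is unchanged by any unimodular factor, I would multiply through by $e^{-i(\theta + hs)}$ to obtain $|\bra{q_1}M\ket{q_1}| = \left|\sum_{j=1}^n q_{j1}^2 e^{ih(\lambda_j - s)}\right|$ for every real $s$. This is exactly where the arbitrariness of $s$ in the statement comes from, and recognizing it is really the only conceptual step in the argument.

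Next I would estimate the modulus from below by its real part and apply the convexity bound $\cos x \ge 1 - \frac{x^2}{2}$ termwise: $\left|\sum_j q_{j1}^2 e^{ih(\lambda_j-s)}\right| \ge \sum_j q_{j1}^2\cos(h(\lambda_j - s)) \ge \sum_j q_{j1}^2 - \frac{h^2}{2}\sum_j q_{j1}^2(\lambda_j-s)^2$. Here I would invoke that $\ket{q_1}$ is a unit vector, i.e.\ $\sum_j q_{j1}^2 = 1$, which holds because $Q$ is orthogonal and $\ket{q_1}$ is one of its columns; this collapses the first sum to $1$ and yields $|\bra{q_1}M\ket{q_1}| \ge 1 - \frac{h^2}{2}\sum_j q_{j1}^2(\lambda_j - s)^2$.

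Finally I would convert the bound on the amplitude into a bound on the probability. Using $p(t_0)-p(t_0+h) = 1 - |\bra{q_1}M\ket{q_1}|^2 = (1+|\bra{q_1}M\ket{q_1}|)(1-|\bra{q_1}M\ket{q_1}|)$ together with the fact that $M$ is unitary (so $|\bra{q_1}M\ket{q_1}| \le 1$, whence the first factor is at most $2$), the product is bounded by $h^2\sum_j q_{j1}^2(\lambda_j - s)^2$, giving the claimed lower bound on $p(t_0+h)$. I do not anticipate a genuine obstacle: the argument is a routine estimate once the $(1,2)\to(1,1)$ reduction is in hand, and the only points demanding care are the phase-freedom step that introduces $s$ and the normalization $\sum_j q_{j1}^2 = 1$. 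It is worth remarking that the hypothesis $|h| < \pi/(\lambda_n-\lambda_1)$ is not actually needed for the inequalities themselves; it merely delimits the regime in which the right-hand side is nonnegative and hence informative.
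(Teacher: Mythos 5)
Your argument is correct and is essentially the paper's own derivation: the paper obtains Theorem \ref{thm:OurKay} by exactly the same chain, writing $|\bra{q_1}M\ket{q_1}|=\left|\sum_j q_{j1}^2 e^{ih(\lambda_j-s)}\right|$ for arbitrary $s$, applying $\cos x\ge 1-\tfrac{x^2}{2}$ termwise, and then factoring $1-|\bra{q_1}M\ket{q_1}|^2$ and bounding the first factor by $2$. Your closing remark that the hypothesis $|h|<\pi/(\lambda_n-\lambda_1)$ is not actually used in these inequalities is also accurate.
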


  Theorem \ref{thm:OurKay} is an improved bound compared to Theorem \ref{thm:time}. Yet direct use of  Theorem \ref{thm:OurKay} requires one to find all eigenvalues $\lambda_1, \dots, \lambda_n$ of the Hamiltonian $H$, while Theorem \ref{thm:time} requires only that the smallest and the largest eigenvalues are known. For large spin systems, Theorem \ref{thm:time} would then be more practical. However, the following consequence of Theorem \ref{thm:OurKay} yields a lower bound on the fidelity that involves the physical parameters of the Hamiltonian itself, and does not require knowledge of any of the eigenvalues of $H$. 
 
\begin{corollary}\label{cor:s-free} Under the hypotheses of Theorem \ref{thm:OurKay}, we have 
\begin{equation} \label{s-free} 
p(t_0+h) \ge 1 - h^2(\bra{1} H^2 \ket{1} - (\bra{1} H \ket{1})^2).  
\end{equation} 
\end{corollary}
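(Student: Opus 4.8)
The plan is to optimize the free parameter $s$ in the bound supplied by Theorem \ref{thm:OurKay}. That theorem gives, for \emph{every} $s \in \R$,
\[
p(t_0+h) \ge 1 - h^2 f(s), \qquad f(s) := \sum_{j=1}^n q_{j1}^2(\lambda_j - s)^2,
\]
and crucially the admissibility condition $|h| < \pi/(\lambda_n - \lambda_1)$ does not involve $s$. So I am free to pick the value of $s$ that minimizes $f(s)$, which will turn the family of bounds into the single sharpest one of this form.

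First I would record that the coefficients $q_{j1}^2$ form a probability distribution: since $Q$ is orthogonal, its first column $\ket{q_1}$ is a unit vector, whence $\sum_{j=1}^n q_{j1}^2 = 1$. Consequently $f(s)$ is a weighted variance of the eigenvalues about $s$, i.e.\ a convex quadratic in $s$ with unique minimizer $s^\star = \sum_{j=1}^n q_{j1}^2 \lambda_j$, at which $f(s^\star) = \sum_{j=1}^n q_{j1}^2 \lambda_j^2 - \bigl(\sum_{j=1}^n q_{j1}^2 \lambda_j\bigr)^2$. The next step is to translate these eigenbasis sums into the matrix-entry form appearing in the statement. Using the spectral decomposition $H = Q^T \Lambda Q$ together with $(Q^T)_{1j} = q_{j1}$, a direct computation gives $\bra{1} H \ket{1} = \sum_{j=1}^n q_{j1}^2 \lambda_j$ and $\bra{1} H^2 \ket{1} = \sum_{j=1}^n q_{j1}^2 \lambda_j^2$; in particular $s^\star = \bra{1}H\ket{1}$. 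Therefore $f(s^\star) = \bra{1} H^2 \ket{1} - (\bra{1} H \ket{1})^2$, and substituting $s = s^\star$ into Theorem \ref{thm:OurKay} yields inequality (\ref{s-free}).

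There is essentially no obstacle here beyond the two routine identifications; the only point requiring a word of care is that $s^\star$ must be a genuine real number so that Theorem \ref{thm:OurKay} applies at this value, and it is, being a weighted average of the real eigenvalues of $H$. I would close by remarking that $f(s^\star)$ is precisely the variance of $H$ in the state $\ket{1}$, which is exactly why the resulting bound is expressible through the matrix entries of $H$ and $H^2$ alone, with no reference to the individual eigenvalues $\lambda_1, \dots, \lambda_n$; this is the practical advantage that the corollary is meant to highlight.
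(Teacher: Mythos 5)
Your proposal is correct and follows essentially the same route as the paper's proof: minimize the quadratic $\sum_j q_{j1}^2(\lambda_j-s)^2$ over $s$, obtain the minimizer $s=\bra{1}H\ket{1}$, and identify the minimum value with $\bra{1}H^2\ket{1}-(\bra{1}H\ket{1})^2$ via the spectral decomposition. The added remarks on $\sum_j q_{j1}^2=1$ and the variance interpretation are consistent with, and slightly more explicit than, the paper's argument.
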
 
\begin{proof}
Observe that the quantity $ \sum_{j=1}^n q_{j_1}^2(\lambda_j-s)^2$ is minimised when $s\sum_{j=1}^n q_{j_1}^2 = \sum_{j=1}^n q_{j_1}^2\lambda_j$, i.e.\ when $s =  \sum_{j=1}^n q_{j_1}^2\lambda_j = \bra{1}H\ket{1}.$
The corresponding minimum value is then given by $ \sum_{j=1}^n q_{j_1}^2(\lambda_j-  \bra{1}H\ket{1} )^2 =  \sum_{j=1}^n q_{j_1}^2\lambda_j^2 - 2  \bra{1}H\ket{1}  \sum_{j=1}^n q_{j_1}^2\lambda_j + ( \bra{1}H\ket{1} )^2  \sum_{j=1}^n q_{j_1}^2 = 
(\bra{1} H^2 \ket{1} - (\bra{1} H \ket{1})^2)).$ Inequality (\ref{s-free}) now follows readily from Theorem \ref{thm:OurKay}. 
\end{proof} 

Inequality \ref{s-free} of  Corollary \ref{cor:s-free}  is   fairly accurate in the following sense: In \cite{Steve2015}, the author considers the derivatives of $p$ at time $t_0$ under the hypotheses of PST at $t_0$. In  \cite[Theorem 2.2]{Steve2015}, it is shown that all odd order derivatives of $p$ at $t_0$ are zero, while the second derivative is equal to $-2(\bra{1} H^2 \ket{1} - (\bra{1} H \ket{1})^2))$. From   \cite[Theorem 2.4]{Steve2015},  it follows that the fourth derivative of $p$ at $t_0$ is positive.  It now follows that for all $h$ with $|h|$ sufficiently small, there is a $c>0$ such that $p(t_0+h) = 1 - h^2(\bra{1} H^2 \ket{1} - (\bra{1} H \ket{1})^2) + ch^4 + O(h^6) $.  In other words, for small $h$, the lower bound of  Corollary \ref{cor:s-free}   is accurate to terms in $h^3$.

We now consider the case where equality holds in the bound of 
Theorem \ref{thm:time} when $H$ is the adjacency matrix of a connected weighted graph. For concreteness, 
suppose that $H$ is of order $n$ and that there is perfect state transfer at time $t_0$. Suppose 
further that for some $h$ with $|h| < \frac{\pi}{\lambda_{n}-\lambda_{1}}$ we have 
$p(t_0)-p(t_0+h)= 1-\frac14|e^{ih\lambda_1}+e^{ih\lambda_n}|^2.$ Denote the multiplicity of 
$\lambda_1$ by $k$, and recall that $\lambda_n,$ as the Perron value of $H$ (that is, the unique maximal eigenvalue as per the Perron–Frobenius theorem), is necessarily simple. 
Examining the proof of Theorem  \ref{thm:time}, it follows that the $\ket{q_1}$ can only have 
nonzero entries in positions corresponding to the eigenvalues $\lambda_1$ and $\lambda_n$, and that 
further the entry in the position corresponding to $\lambda_n$, the $j$--th position say, 
must be $\pm \frac{1}{\sqrt{2}}$. Since the entries of $\ket{q_2}$ can only differ from the 
corresponding entries of $\ket{q_1}$ by a sign, we deduce that the 
$j$--th entry of $\ket{q_2}$ must also be $\pm \frac{1}{\sqrt{2}}$. Observe that since every column 
of  $Q^T$ has $2$--norm equal to $1$, the $j$--th column of $Q^T$ has nonzero entries only in its 
first two rows.  
But the $j$--th column of $Q^T$ is a Perron vector for $H$---that is, an eigenvector corresponding to the (positive and dominant) Perron eigenvalue of $H$---and so it cannot have any zero entries (again by the  Perron–Frobenius theorem). 
We thus deduce that $n$ must be $2$, and that $H$ must be a positive scalar multiple of $\bpm 
0&1\\1&0\epm$. 

Conversely, suppose that $H$ is a positive scalar multiple of $\bpm 0&1\\1&0\epm$, and without loss 
of generality we assume that  $H=\bpm 0&1\\1&0\epm .$ Then 
\bes
e^{itH}&=&\sum_{j=0}^\infty\frac{i^j}{j!}H^jt^j=\sum_{j \textnormal{ even}}\frac{(i)^j}{j!}t^jI
+\sum_{j \textnormal{ odd}}\frac{(i)^j}{j!}t^jH\\
&=&\bpm \cos t&i\sin t\\ i\sin t&\cos t\epm.
\ees
At time $\frac{\pi}2$ we have $\bpm 0&i\\i&0\epm$ and so we have PST (since $|i|=1$). At time 
$\frac{\pi}2+h$, the $(1,2)$ entry is $i\cos h=\frac12(e^{ih}+e^{-ih})$, 
and so the  bound in Theorem \ref{thm:time} is attained. 

Although Theorem \ref{thm:time} is true for either adjacency matrices or Laplacians, we can adapt 
the technique of Theorem \ref{thm:time} slightly to produce an improved bound on the fidelity in the 
setting of the Laplacian matrix, since we have more information at hand. 

\begin{theorem}\label{lapl:time} Let $L$ be the Laplacian matrix of a connected weighted graph on $n 
\ge 3$ vertices. Denote the eigenvalues of $L$ by $0 \equiv \lambda_1 < \lambda_2 \le \ldots \le 
\lambda_n.$ 
Suppose that there is perfect state transfer at time $t_0$; that is, $p(t_0)=1$. Suppose there is a 
small timing perturbation, and the readout time is instead $t_0+h$, where  $h$ satisfies 
$|h|< \frac{\pi}{\lambda_n-\lambda_2}$. Then 
\begin{eqnarray} \label{lapl_bd} \nonumber 
p(t_0+h) &\ge& 1 - \frac{(n-1)^2(1-\cos ((\lambda_n-\lambda_2)h))}{2n^2}   \nonumber \\
&& - \frac{(n-1)(2- \cos (\lambda_2 h)  - \cos (\lambda_n h))}{n^2}  \nonumber \\
&& - \frac{(\cos (\lambda_2 h)  - \cos (\lambda_n h) )^2}{2n^2(1- \cos((\lambda_n-\lambda_2)h)}.
\end{eqnarray}
\end{theorem}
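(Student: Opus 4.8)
The plan is to re-run the numerical-range reduction at the heart of Theorem~\ref{thm:time}, while exploiting the one extra piece of data that the Laplacian of a connected graph supplies: the eigenvalue $\lambda_1=0$ is simple, with kernel spanned by the normalised all-ones vector $\frac{1}{\sqrt n}(1,\dots,1)^T$. As in the proof of Theorem~\ref{thm:time}, perfect state transfer at $t_0$ gives
\[
p(t_0+h)=\left|\bra{q_1}M\ket{q_1}\right|^2=\left|\sum_{j=1}^n q_{j1}^2\,e^{ih\lambda_j}\right|^2,
\]
with $M=\diag(e^{ih\lambda_1},\dots,e^{ih\lambda_n})e^{i\theta}$, and $c_j:=q_{j1}^2\ge 0$, $\sum_j c_j=1$. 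Since the first column of $Q^T$ is the constant eigenvector associated with $\lambda_1=0$, I know \emph{exactly} that $c_1=q_{11}^2=\tfrac1n$ and $e^{ih\lambda_1}=1$. Splitting this known term off yields
\[
p(t_0+h)=\left|\frac1n+\sum_{j=2}^n c_j e^{ih\lambda_j}\right|^2,\qquad \sum_{j=2}^n c_j=\frac{n-1}{n}.
\]

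Next I would pass to the geometry. Setting $w=\frac{n}{n-1}\sum_{j=2}^n c_j e^{ih\lambda_j}$, so that $w$ is a convex combination of the unit-modulus numbers $e^{ih\lambda_2},\dots,e^{ih\lambda_n}$, I record the identity $p(t_0+h)=\big(\tfrac{n-1}{n}\big)^2|w-P_0|^2$ with $P_0:=-\tfrac{1}{n-1}$. Because $|h|<\pi/(\lambda_n-\lambda_2)$, the points $e^{ih\lambda_j}$ for $j\ge 2$ all lie on a circular arc of angular width $h(\lambda_n-\lambda_2)<\pi$ with endpoints $e^{ih\lambda_2}$ and $e^{ih\lambda_n}$, so $w$ lies in the convex hull of that arc. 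Minimising $p(t_0+h)$ thus reduces to locating the point of this lens-shaped hull closest to $P_0$. The geometric claim I would establish is that this closest point lies on the chord joining the extreme points $e^{ih\lambda_2}$ and $e^{ih\lambda_n}$ --- equivalently, that $P_0$ and the arc lie on opposite sides of the chord line --- whence $\operatorname{dist}(P_0,\text{hull})=\operatorname{dist}(P_0,\text{chord line})$ and the intermediate eigenvalues cannot pull the fidelity lower.

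Granting the reduction, the computation is routine. Parametrising the chord as $t\,e^{ih\lambda_2}+(1-t)e^{ih\lambda_n}$ turns $p(t_0+h)$ into a quadratic in $t$ whose minimiser is the foot of the perpendicular from $P_0$ to the chord line; evaluating the quadratic there and simplifying with $|e^{i\alpha}-e^{i\beta}|^2=2(1-\cos(\beta-\alpha))$ and $\cos\alpha+\cos\beta=2\cos\frac{\alpha+\beta}{2}\cos\frac{\beta-\alpha}{2}$ (writing $\alpha=\lambda_2 h$, $\beta=\lambda_n h$) produces exactly the three displayed terms. As a convenient checkpoint, the entire right-hand side should collapse to the single expression $\frac{1}{n^2}\big((n-1)\cos\frac{(\lambda_n-\lambda_2)h}{2}+\cos\frac{(\lambda_2+\lambda_n)h}{2}\big)^2$.

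The hard part will be justifying the reduction to the chord. The closest-point argument is valid precisely when $(n-1)\cos\frac{(\lambda_n-\lambda_2)h}{2}+\cos\frac{(\lambda_2+\lambda_n)h}{2}\ge 0$ (the quantity being squared above): this certainly holds in the physically relevant small-$h$ regime, but it is not automatic from $|h|<\pi/(\lambda_n-\lambda_2)$ alone, since as $\tfrac{(\lambda_2+\lambda_n)h}{2}$ nears $\pi$ the point $P_0$ can drift inside the hull. I would close this gap either by invoking the small-$h$ hypothesis directly, or, for the full range, by bringing in the orthogonality relation $\langle q_1 | q_2\rangle=0$, which together with the sign structure $q_{j2}=\pm q_{j1}$ forced by PST constrains the admissible weights $c_j$ and excludes the configurations that would otherwise undercut the bound.
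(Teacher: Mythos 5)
Your proposal follows essentially the same route as the paper's proof: the paper likewise uses the fact that the normalised all-ones vector is a null vector of $L$ to write $\ket{q_1}=(1/\sqrt n,x_2,\dots,x_n)^T$, reduces $p(t_0+h)$ to $\min\bigl|\tfrac1n+\bra{z}\diag(e^{ih\lambda_2},\dots,e^{ih\lambda_n})\ket{z}\bigr|^2$ over $\langle z|z\rangle=\tfrac{n-1}{n}$, asserts from ``elementary geometric considerations'' that the minimum is attained on convex combinations of the two extreme points $e^{ih\lambda_2}$ and $e^{ih\lambda_n}$, and then does the calculus on the resulting quadratic in $\alpha$. Your compact closed form is correct: the displayed right-hand side of (\ref{lapl_bd}) does simplify to $\frac{1}{n^2}\bigl((n-1)\cos\tfrac{(\lambda_n-\lambda_2)h}{2}+\cos\tfrac{(\lambda_2+\lambda_n)h}{2}\bigr)^2$, which is a nicer way to present the bound than the three-term expression in the statement.

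The one place where you go beyond the paper is in flagging the validity of the reduction to the chord, and your concern is legitimate: the inequality $\mathrm{dist}(P_0,\mathrm{hull})\ge\mathrm{dist}(P_0,\text{chord line})$ requires $P_0=-\tfrac{1}{n-1}$ to lie on the opposite side of the chord line from the arc, i.e.\ $(n-1)\cos\tfrac{(\lambda_n-\lambda_2)h}{2}+\cos\tfrac{(\lambda_2+\lambda_n)h}{2}\ge 0$, and this does not follow from $|h|<\pi/(\lambda_n-\lambda_2)$ alone (it can fail when $\lambda_2$ and $\lambda_n$ are both large relative to their gap). The paper's proof passes over this point silently, so you have not introduced a new gap --- you have exposed one that is already present. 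Note that the other imprecision you might worry about, namely whether the foot of the perpendicular lands inside the chord segment (equivalently whether the paper's critical $\alpha$ lies in $[0,\tfrac{n-1}{n}]$), is harmless: the distance to the full line is always a lower bound for the distance to the segment, so the inequality survives even when the unconstrained minimiser is infeasible. To submit this as a complete proof you would either need to add the hypothesis $(n-1)\cos\tfrac{(\lambda_n-\lambda_2)h}{2}+\cos\tfrac{(\lambda_2+\lambda_n)h}{2}\ge 0$ (automatic for small $h$), or actually carry out the orthogonality argument you sketch at the end --- as written, that last step is only a plan, and it is not obvious that the constraint $\langle q_1|q_2\rangle=0$ (which forces $\sum_{\epsilon_j=-1}c_j=\tfrac12$ and $\sum_{\epsilon_j=+1}c_j=\tfrac{n-2}{2n}$) excludes all offending configurations.
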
 
\begin{proof} 
We note that  the normalised   all--ones vector 
$\frac{1}{\sqrt{n}} \sum_{i=1}^{n}\ket{i}$ is a null vector for $L$.
Then $L = Q^T\diag(\lambda_1, \dots, \lambda_n)Q$ such that 
$\ket{q_1}$ has the form $(1/\sqrt n, x_2, \dots, x_n)^T$ where $\sum_{j=2}^nx_j^2 = \frac{n-1}{n}$. 
Mimicking the proof of Theorem \ref{thm:time}, we find that $p(t_0+h)$ is bounded below by
$$
\min \Big|\frac{1}{n} + \bra{z} \diag(e^{ih\lambda_2}, \dots, e^{ih\lambda_n}) \ket{z} \Big|^2,$$ 
where the minimum is taken over all unnormalised $\ket{z} \in \IR^{n-1}$ such that $\langle z|z\rangle=\frac{n-1}{n}$. 
From elementary geometric considerations (in short, the $e^{ih\lambda_j}$ are points on the unit 
circle and so the minimum will be attained by taking a convex combination of the smallest and 
largest values, namely $e^{ih\lambda_2}$ and $e^{ih\lambda_n}$), we find that in fact 
\bes
\min \Big |\frac{1}{n} + \bra{z}\diag(e^{ih\lambda_2}, \dots, e^{ih\lambda_n}) \ket{z} \Big |^2 = \\ 
\min_{0 \le \alpha \le \frac{n-1}{n}} \Big|\frac{1}{n}+ \alpha e^{ih\lambda_2} + \left(\frac{n-1}{n}-\alpha \right) e^{ih\lambda_n} \Big|^2.
\ees 
A routine calculus exercise (the minimum corresponds to  
\bes
\alpha = \frac{n-1}{2n} +
\frac{\cos(h\lambda_n) - \cos(h\lambda_2)}{2n(1-\cos(h(\lambda_2-\lambda_n)))}\Big)
\ees  shows that this last quantity is given by the right hand side of (\ref{lapl_bd}).
\end{proof} 

\begin{example} Suppose that $n$ is divisible by $4$, and consider the unweighted graph on  
vertices $1, \ldots, n$, say $G$, formed by deleting the edge between vertices $1$ and $2$ 
from the complete graph on $n$ vertices. Let $L$ denote the Laplacian matrix for $G$, and note that $L$ has three eigenvalues: 
$0$, with corresponding eigenprojection matrix $\frac{1}{n}J$ (where $J$ is the all ones matrix), $n-2$ with eigenprojection matrix 
$\frac{1}{2}(\ket{1}-\ket{2})(\bra{1}-\bra{2}),$ and $n$ with eigenprojection matrix 
$$\left[\begin{array}{c|c} \frac{n-2}{2n} J_2 & -\frac{1}{n} J_{2,n-2} \\ 
\hline -\frac{1}{n} J_{n-2,2}&I-\frac{1}{n}J_{n-2} \end{array}\right].$$  
It is shown in \cite{K-Sev} that, using $L$ as the Hamiltonian, there is perfect state transfer 
from vertex $1$ to vertex $2$ at time $\frac{\pi}{2}.$ 

Using the eigenvalues and eigenprojection matrices above, we find that for any $h,$ the fidelity 
at  time $\frac{\pi}{2} + h$ is given by 
$$p \left(\frac{\pi}{2} + h \right) = \Big | \frac{1}{n} + \frac{1}{2} e^{ih(n-2)} 
+ \frac{n-2}{2n}e^{ihn} \Big|^2.$$ 
This last expression can be simplified to yield 
\bes
p\left(\frac{\pi}{2} + h \right) &=& 1 - \frac{n-2}{2n} (1 - \cos(2h)) 
- \frac{1}{n}(1-\cos((n-2)h)) \\
&&- \frac{(n-2)}{n^2}(1-\cos(nh)).
\ees 
An uninteresting computation reveals that $p(\frac{\pi}{2}+h)$ exceeds 
 the lower bound of (\ref{lapl_bd}) in the amount of $$\frac{(\cos((n-2)h) 
 - \cos(nh)+1- \cos(2h))^2}{2n^2(1-\cos(2h))}.$$
We note in passing that  $ \frac{(\cos((n-2)h) - \cos(nh)+1- \cos(2h))^2}{2n^2(1-\cos(2h))}$ is 
asymptotic to $h^2$ as $h \rightarrow 0.$
\end{example}

\section{Sensitivity with respect to edge weights}\label{sec:edgeweight}
 
The motivation for studying edge weight perturbation is similar to that for readout time: although 
lab equipment can be calibrated to high precision, small errors (even machine epsilon) will affect 
the state transition probability. We want to bound this effect so that small perturbations in edge 
weight do not drastically reduce transition probability.

Suppose we have PST between vertices $j$ and $k$ at time $t_0$.  As in section \ref{sec:time}, 
without loss of generality we consider the $(1,2)$ entry of our matrix under consideration. Here, 
we keep the time constant at $t_0$, and perturb edge weights; combining the techniques employed in section 
\ref{sec:time} for readout time with the techniques in this section to 
obtain a bound for the situation when both readout time and edge weights are perturbed is beyond the scope of this paper.
 
Let $H \in M_n$  be a real symmetric matrix representing the Hamiltonian of our system.   
Suppose  $\bra{1}e^{it_0H} \ket{2}$ has modulus 1; this is the case 
for PST with $H$ either an adjacency or Laplacian matrix. 
Consider now $\hat{H}=t_0H+H_0$ where $H_0$ is a matrix representing small perturbations of edge 
weights and we have absorbed the time component into $H_0$ since we are keeping time fixed. 
Mathematically, we would like to find a perturbation bound for 
$|\bra{1}e^{itH} \ket{2}|^2 - |\bra{1}e^{i(tH + H_0)} \ket{2}|^2$ or, when $t=t_0$, 
$$1- |\bra{1}e^{i(t_0H + H_0)} \ket{2}|^2  $$ 
for a symmetric matrix $H$ with sufficiently small $H_0$,
say, measured by the operator norm $\|H_0\|$ or the Frobenius norm $\|H_0\|_F$. 

Note that the entries of  the matrix
$H_0$ represent individual edge weight errors, so our approach allows for individual edge 
weight perturbations rather than simply an overall (global) edge weight perturbation (where all edge weights are perturbed by e.g.\ 0.0001 in the same direction) or a single edge weight perturbation (where all other edge weights remain unperturbed); the latter case was the (rather restrictive) situation considered in \cite{Steve2015}.

We begin with the following.

\begin{theorem} \label{bound-GVL}
Suppose a perfect state transfer occurs at time $t_0$,
and $\hat H = t_0H+ H_0$, with a small nonzero
perturbation $H_0$. Then
$$\|e^{i(Ht_0 + H_0)}-e^{iHt_0}\| \le \|H_0\| e^{\|H_0\|}.$$
Consequently,
\begin{eqnarray}\label{exp-bound}
1 - |\bra{1} e^{i(Ht_0+H_0)} \ket{2}|^2 &\le& 2 \|H_0\|e^{\|H_0\|} - \|H_0\|^2e^{2\|H_0\|}\nonumber\\
&\le& 2\|H_0\| + \|H_0\|^2 - \|H_0\|^3.
\end{eqnarray}
\end{theorem}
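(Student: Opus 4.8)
The plan is to prove the operator-norm perturbation estimate first, and then deduce the fidelity bound from it by a short Cauchy--Schwarz / triangle-inequality argument. The key matrix-analytic fact I would invoke is the standard bound on the difference of matrix exponentials: for Hermitian (more generally, normal) arguments $X$ and $X+E$ one has
\[
 \|e^{i(X+E)} - e^{iX}\| \le \|E\|\, e^{\|E\|}.
\]
Here $X = Ht_0$ and $E = H_0$, both Hermitian, so this applies directly and yields the first displayed inequality
\[
 \|e^{i(Ht_0+H_0)} - e^{iHt_0}\| \le \|H_0\|\, e^{\|H_0\|}.
\]
The cleanest way I would derive this is via the integral (Duhamel) identity
\[
 e^{i(X+E)} - e^{iX} = \int_0^1 e^{i(X+E)\tau}\, (iE)\, e^{iX(1-\tau)}\, d\tau,
\]
take norms under the integral, and use $\|e^{iX(1-\tau)}\| = 1$ (unitarity of the exponential of a Hermitian matrix) together with $\|e^{i(X+E)\tau}\| = 1$ for the same reason; this would actually give the sharper bound $\|H_0\|$, but to stay parallel with the series-based proof I would instead bound term-by-term and land on $\|H_0\|e^{\|H_0\|}$. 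I expect the exponential factor here to come from a crude series estimate rather than from the unitarity argument, and it is harmless since $H_0$ is small.

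**From operator norm to fidelity.**
For the second displayed chain, write $\Delta = e^{i(Ht_0+H_0)} - e^{iHt_0}$, so that $\|\Delta\| \le \|H_0\|e^{\|H_0\|} =: \delta$. Using the PST hypothesis $|\bra{1} e^{iHt_0} \ket{2}| = 1$, I would estimate
\[
 \big| |\bra{1} e^{i(Ht_0+H_0)}\ket{2}| - 1 \big|
   = \big| |\bra{1} e^{i(Ht_0+H_0)}\ket{2}| - |\bra{1} e^{iHt_0}\ket{2}| \big|
   \le |\bra{1}\Delta\ket{2}| \le \|\Delta\| \le \delta,
\]
where the middle step is the reverse triangle inequality for $|\cdot|$ and the last step is $|\bra{1}\Delta\ket{2}| \le \|\ket{1}\|\,\|\Delta\|\,\|\ket{2}\| = \|\Delta\|$. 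Hence $|\bra{1} e^{i(Ht_0+H_0)}\ket{2}| \ge 1 - \delta$, and therefore
\[
 1 - |\bra{1} e^{i(Ht_0+H_0)}\ket{2}|^2 \le 1 - (1-\delta)^2 = 2\delta - \delta^2
   = 2\|H_0\|e^{\|H_0\|} - \|H_0\|^2 e^{2\|H_0\|},
\]
which is exactly the first inequality of (\ref{exp-bound}).

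**The final elementary inequality.**
It remains to justify the second inequality of (\ref{exp-bound}), namely
\[
 2\|H_0\|e^{\|H_0\|} - \|H_0\|^2 e^{2\|H_0\|} \le 2\|H_0\| + \|H_0\|^2 - \|H_0\|^3.
\]
This is a purely scalar statement in the single nonnegative variable $x = \|H_0\|$, valid for $x$ sufficiently small, and I would verify it by comparing Taylor expansions: the left side equals $2x(1 + x + \tfrac{x^2}{2} + \cdots) - x^2(1 + 2x + \cdots) = 2x + 2x^2 + x^3 - x^2 - 2x^3 + \cdots = 2x + x^2 - x^3 + O(x^4)$, matching the right side to third order, so the claimed inequality holds for small $x$ (which is the regime of interest, since $H_0$ is a small perturbation). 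I expect \emph{this scalar estimate to be the only genuinely fiddly point}: one must either restrict to the small-$\|H_0\|$ regime or track the error terms carefully, since $e^{\|H_0\|} > 1 + \|H_0\|$ means the inequality is a statement about the higher-order tails canceling favourably rather than an identity. Everything upstream of it is standard matrix analysis.
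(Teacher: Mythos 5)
Your proposal is correct and follows essentially the same route as the paper, which obtains the bound $\|e^{i(Ht_0+H_0)}-e^{iHt_0}\|\le\|H_0\|e^{\|H_0\|}$ by citing Golub--Van Loan (p.~532) rather than via Duhamel's formula (your aside that unitarity actually yields the sharper bound $\|H_0\|$ is a fair observation), and then passes to the fidelity via the reverse triangle inequality, squaring, and the same power-series comparison you describe. The one step to tighten is the final scalar inequality, where matching Taylor coefficients to third order is not by itself sufficient: one should check that the coefficient of $x^k$ in $2xe^x-x^2e^{2x}$, namely $\tfrac{2}{(k-1)!}-\tfrac{2^{k-2}}{(k-2)!}$, is negative for every $k\ge 4$ (it is, since $(k-1)2^{k-2}>2$ for $k \ge 3$), which in fact shows the inequality holds for all $x=\|H_0\|\ge 0$ and not merely for small perturbations.
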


\begin{proof} Set $\Delta t_0 = H_0$. Using
the result in \cite[p.532]{GVL} 
and the fact that $H$ is Hermitian, we have 
$$\|e^{i(H+\Delta)t_0}-e^{iHt_0}\| \le t_0 \|\Delta\| e^{t_0\|\Delta\|}=  
\|H_0\| e^{\|H_0\|}.$$
Consequently, 
$$|\bra{1}e^{it_0H}\ket{2}|-|\bra{1}e^{i(t_0H + H_0)}\ket{2}|\le \|H_0\| e^{\|H_0\|}$$
so that 
$$1 - \|H_0\| e^{\|H_0\|} \le |\bra{1}e^{i(t_0H+H_0)}\ket{2}|.$$
Squaring both sides and rearranging terms, we have
\begin{eqnarray*}
&&1- |\bra{1} e^{i(t_0H+H_0)}\ket{2}|^2 \\
&\le& 2\|H_0\| e^{\|H_0\|} - \|H_0\|^2 e^{2\|H_0\|}\\
&=& 2\|H_0\|(1 + \frac{\|H_0\|}{1!} + \frac{\|H_0\|^2}{2!} + \cdots ) \\
&&  - \|H_0\|^2 (1 +  \frac{2\|H_0\|}{1!} + \frac{(2\|H_0\|)^2}{2!} + \cdots ) \\
&\le& 2\|H_0\| + \|H_0\|^2 -  \|H_0\|^3. 
\end{eqnarray*} 
so that (\ref{exp-bound}) holds. 
\end{proof}

We note that the estimate $\|e^{i(Ht_0 + H_0)}-e^{iHt_0}\| \le \|H_0\| e^{\|H_0\|}$ of Theorem \ref{bound-GVL} can be reasonably accurate. For example, suppose $H$ is the adjacency matrix of a connected weighted graph yielding perfect state transfer at time $t_0$. Let $\ket{v}$ denote the positive Perron vector of $H$ with norm one, and suppose that $H_0$ has the  form $\epsilon\ket{v}\bra{v}$ for some small $\epsilon >0.$ Then  $\|e^{i(Ht_0 + H_0)}-e^{iHt_0}\|  = |e^{i \epsilon}-1|$ while $ \|H_0\| e^{\|H_0\|} = \epsilon e^{\epsilon}$, so that $$\frac{\|e^{i(Ht_0 + H_0)}-e^{iHt_0}\|}{ \|H_0\| e^{\|H_0\|}} \rightarrow 1$$ as $\epsilon \rightarrow 0^+.$ 

\medskip
If we have additional information about the matrix $t_0H$, 
we may be able to produce some better bounds as shown in Theorem \ref{thm:ewps-1} below. Before presenting the theorem, we require a preliminary proposition which is intuitively clear. Its proof consists of elementary linear algebra manipulation techniques; we give the proof for completeness. 

\begin{proposition}\label{prop:decomp} 
Suppose there is perfect state transfer at time $t_0$; that is, $|\bra{1}e^{it_0H}\ket{2}| = 1$.
Then for some $\theta \in \IR$, $t_0H =  \tilde{Q}^T\tilde{D}\tilde{Q}+\theta I$, where 
$$\tilde{D} = \pi \diag(r_1, \dots, r_\ell, r_{\ell+1}, \dots, r_m, r_{m+1}, \dots, r_n)$$
such that $r_1\geq  \cdots\geq  r_\ell$ are positive even integers, and 
$r_{\ell+1}\geq  \cdots\geq r_m$ are positive odd integers.
Further, the first two rows of $\tilde{Q}^T$ can be taken to have the form
$(x_1, \dots, x_m, 0, \dots, 0)$ and $(x_1,\dots, x_\ell, -x_{\ell+1}, \dots, -x_m, 0, \dots, 0)$
satisfying $x_1, \dots, x_m \ge 0$.
\end{proposition}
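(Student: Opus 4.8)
The plan is to diagonalize $H$ and translate the perfect-state-transfer hypothesis into a single scalar equation whose equality cases pin down the required structure. Writing $H = Q^T\Lambda Q$ as before, the PST condition $|\bra{1}e^{it_0H}\ket{2}| = 1$ becomes $\left|\sum_{j=1}^n q_{j1}q_{j2}\, e^{it_0\lambda_j}\right| = 1$. First I would bound this sum from above by two successive inequalities: the triangle inequality gives $\left|\sum_j q_{j1}q_{j2}e^{it_0\lambda_j}\right| \le \sum_j |q_{j1}q_{j2}|$, while Cauchy--Schwarz together with the orthonormality of the columns of $Q$ (so that $\sum_j q_{j1}^2 = \sum_j q_{j2}^2 = 1$) gives $\sum_j |q_{j1}q_{j2}| \le 1$. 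Since the left-hand side equals $1$, both inequalities are in fact equalities, and the whole argument hinges on extracting the two equality conditions.

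Next I would read off what each equality forces. Equality in Cauchy--Schwarz forces the nonnegative vectors $(|q_{j1}|)_j$ and $(|q_{j2}|)_j$ to be proportional, and since both are unit vectors they must coincide; hence $|q_{j1}| = |q_{j2}| =: x_j \ge 0$ for every $j$. In particular $q_{j1}q_{j2} = 0$ precisely when $x_j = 0$, i.e.\ when both entries vanish. Equality in the triangle inequality forces all the nonzero summands $q_{j1}q_{j2}e^{it_0\lambda_j}$ to share a common argument $\psi$. Because $q_{j1}q_{j2}$ is a nonzero real number for each such $j$, this says $t_0\lambda_j \equiv \psi \pmod{2\pi}$ when $q_{j1}q_{j2} > 0$ and $t_0\lambda_j \equiv \psi + \pi \pmod{2\pi}$ when $q_{j1}q_{j2} < 0$.

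With these facts in hand, I would take $\theta$ to be a real number congruent to $\psi$ modulo $2\pi$, chosen sufficiently negative that $t_0\lambda_j - \theta > 0$ for all the finitely many indices $j$ with $x_j \ne 0$; shifting $\theta$ by an even multiple of $\pi$ preserves the parity distinction above. Then $t_0\lambda_j - \theta$ is a positive even multiple of $\pi$ when $q_{j1}q_{j2} > 0$ and a positive odd multiple of $\pi$ when $q_{j1}q_{j2} < 0$. Finally I would reorganize the spectral decomposition of $t_0H - \theta I = Q^T(t_0\Lambda-\theta I)Q$: permute the eigenvectors so the even-type indices come first (in decreasing order of eigenvalue), then the odd-type indices, then the indices with $x_j = 0$, and flip the sign of each of the first $m$ eigenvectors so that its first entry is $+x_j$. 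Since multiplying the columns of $Q^T$ by a diagonal sign matrix $S$ leaves $Q^T\tilde D Q$ unchanged (as $S$ commutes with the diagonal $\tilde D$ and $S^2 = I$), and permutations act compatibly on $\tilde D$, this produces orthogonal $\tilde Q^T$ and diagonal $\tilde D = \pi\,\diag(r_1,\dots,r_n)$ of the asserted form. Reading off the first two rows of $\tilde Q^T$ gives $(x_1,\dots,x_m,0,\dots,0)$ and $(x_1,\dots,x_\ell,-x_{\ell+1},\dots,-x_m,0,\dots,0)$, since same-sign products yield matching entries $x_j$ and opposite-sign products yield $x_j, -x_j$.

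The main obstacle is the clean extraction of both equality conditions from the doubled inequality: one must recognize that the single scalar equation saturates Cauchy--Schwarz and the triangle inequality \emph{simultaneously}, and then track signs carefully so that the parity of each eigenvalue multiple lines up with the sign pattern in the first two rows of $\tilde Q^T$. The remaining bookkeeping---sorting within each block, sign normalization, and the choice of $\theta$---is routine once the equality analysis is in place.
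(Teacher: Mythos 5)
Your proof is correct and is essentially the paper's argument: the paper extracts the same componentwise facts ($|q_{j1}|=|q_{j2}|$ and $t_0\lambda_j-\theta$ an integer multiple of $\pi$ with parity set by the sign of $q_{j1}q_{j2}$) directly from the vector identity $e^{it_0\Lambda}\ket{q_2}=e^{i\theta}\ket{q_1}$, which is just the equality case you recover by saturating the triangle and Cauchy--Schwarz inequalities on the expanded sum. The remaining steps---shifting $\theta$ by an even multiple of $2\pi$ to make the integers positive, permuting into even/odd/zero blocks, and normalizing signs with a signature matrix---match the paper's bookkeeping exactly.
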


\begin{proof} As in section \ref{sec:time}, let $H = Q^T\Lambda Q$ for a real diagonal matrix $\Lambda$
and $Q$ an orthogonal matrix. Since we are focusing here on edge weights rather than time, let us   consider $D=t_0\Lambda=\diag(t_0\lambda_1, \dots, t_0\lambda_n)$. 
Suppose the first two columns of $Q$ are $\ket{q_1}$ and $\ket{q_2}$.
Then $|\bra{q_1} e^{iD} \ket{q_2}| = 1$ implies that
$e^{iD}  \ket{q_2}= e^{i\theta} \ket{q_1}$ or, equivalently, 
$e^{-i\theta} e^{iD} \ket{q_2} = \ket{q_1}$ for some $\theta\in \R$. Any zero entries of $\ket{q_1}$ show up correspondingly as zero entries of $\ket{q_2}$. For a suitable permutation matrix $P_1$ we can replace $(D,Q)$ by $(P_1DP_1^T, P_1Q)$ so that the zero entries of $P_1\ket{q_1}$ all occur in the last $n-m$ entries, for some $0<m\leq n$. 
In this way, we may assume that $P_1(D - \theta I)P_1^T$ is a diagonal matrix
with diagonal entries of the form $s_1 \pi, \dots, s_m\pi, *, \dots, *$ for some
integers $s_1, \dots, s_m$. The asterisks in the $(m+1, m+1)$ up to $(n,n)$ entries of the diagonal matrix  $P_1(D - \theta I)P_1^T$ represent unknown constants, corresponding to the zero entries (if any) of $P_1\ket{q_1}$.  We can replace $\theta$ by $\theta - 2 s \pi$ for a sufficiently large
integer $s$ so that we may assume that $s_1, \dots, s_m$ are positive integers.

Next, for a suitable permutation matrix $P_2$ we can replace the pair $(P_1DP_1^T, P_1Q)$ by $(P_2P_1DP_1^TP_2^T, P_2P_1Q),$  so that 
$P_2P_1(D-\theta I)P_1^TP_2^T = \pi \diag(r_1, \dots, r_n)$ with $r_1\geq  \cdots\geq r_\ell$  even, 
$r_{\ell+1}\geq  \cdots\geq r_m$  odd, and $r_{m+1}, \dots, r_n$ unknown constants; note that we still have
$t_0H = (P_2P_1Q)^T (P_2P_1DP_1^TP_2^T) (P_2P_1Q)$.
Further, we may replace the pair  $(P_2P_1DP_1^TP_2^T, P_2P_1Q),$ by  $(SP_2P_1DP_1^TP_2^TS, SP_2P_1Q),$ for some diagonal orthogonal matrix
(i.e.\ a signature matrix) $S$ such that the first column of $SP_2P_1Q$, namely 
$SP_2P_1\ket{q_1} = (x_1, \dots, x_m, 0, \dots, 0)^T$,  satisfies $x_1, \dots, x_m\geq 0$.
Now, $SP_2P_1e^{-i\theta}e^{iD} \ket{q_2} = SP_2P_1\ket{q_1}$ implies that 
$$SP_2P_1\ket{q_2}= (x_1, \dots, x_\ell, -x_{\ell+1}, \dots, -x_m, 0, \dots, 0)^T.$$ Relabelling $\tilde{D}=SP_2P_1(D-\theta I)P_1^TP_2^TS$ and $\tilde{Q}=SP_2P_1Q$ for simplicity, the result now follows.
\end{proof}

\begin{theorem}\label{thm:ewps-1} 
Suppose a perfect state transfer occurs at time $t_0$,
and $\hat H = t_0H+ H_0$, with a small nonzero
perturbation $H_0$. Furthermore, assume that value $m$ in Proposition {\rm \ref{prop:decomp}}
equals $n$. Then 
\begin{equation*}
1- |\bra{1}e^{i \hat H}\ket{2}|^2 \le \frac{2\|H_0\|_F^2}{(\pi - \|H_0\|)^2} + \|H_0\|^2 + O(\|H_0\|^3).
\end{equation*}
  \end{theorem}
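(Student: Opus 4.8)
The plan is to combine the normal form of Proposition~\ref{prop:decomp} with a first-order Duhamel expansion of the perturbed propagator, using the even/odd parity structure of $\tilde D$ to split the fidelity loss into a spectral-norm piece and a Frobenius-norm piece. First I would reduce to a clean model problem. By Proposition~\ref{prop:decomp} with $m=n$ we may write $t_0H=\tilde Q^T\tilde D\tilde Q+\theta I$ with $\tilde D=\pi\diag(r_1,\dots,r_n)$, where $r_1,\dots,r_\ell$ are positive even and $r_{\ell+1},\dots,r_n$ positive odd integers, and the first two columns of $\tilde Q$ are $\ket{\tilde q_1}=(x_1,\dots,x_n)^T$ and $\ket{\tilde q_2}=(x_1,\dots,x_\ell,-x_{\ell+1},\dots,-x_n)^T$. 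Conjugating by $\tilde Q$ and absorbing the modulus-one scalar $e^{i\theta}$, and setting $E=\tilde Q H_0\tilde Q^T$ (real symmetric, with $\|E\|=\|H_0\|$ and $\|E\|_F=\|H_0\|_F$ by orthogonal invariance), the quantity to estimate is $1-|f|^2$ with $f=\bra{\tilde q_1}e^{i(\tilde D+E)}\ket{\tilde q_2}$. The parity structure forces $e^{i\tilde D}=\diag(1,\dots,1,-1,\dots,-1)=:\Sigma$ (with $\ell$ ones), hence $\ket{\tilde q_2}=\Sigma\ket{\tilde q_1}=e^{i\tilde D}\ket{\tilde q_1}$ and $e^{i\tilde D}\ket{\tilde q_2}=\ket{\tilde q_1}$ (the unperturbed PST). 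Since $W:=e^{i(\tilde D+E)}\Sigma$ is unitary and $\ket{\tilde q_1}$ is a real unit vector, $f=\bra{\tilde q_1}W\ket{\tilde q_1}$, and a direct computation gives $\|W\ket{\tilde q_1}-\ket{\tilde q_1}\|^2=2(1-\mathrm{Re}\,f)$; together with $|f|^2\ge(\mathrm{Re}\,f)^2$ and $|f|\le 1$ this yields the master inequality
\[
1-|f|^2\;\le\;2(1-\mathrm{Re}\,f)\;=\;\big\|\big(e^{i(\tilde D+E)}-e^{i\tilde D}\big)\ket{\tilde q_2}\big\|^2 .
\]

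Next I would expand to first order in $E$. Writing $e^{i(\tilde D+E)}-e^{i\tilde D}=i\int_0^1 e^{is\tilde D}Ee^{i(1-s)\tilde D}\,ds+R$ with $\|R\|=O(\|E\|^2)$, the cross term is $O(\|E\|^3)$, so $\|(e^{i(\tilde D+E)}-e^{i\tilde D})\ket{\tilde q_2}\|^2=\|\ket{u}\|^2+O(\|E\|^3)$ where $\ket{u}=i\int_0^1 e^{is\tilde D}Ee^{i(1-s)\tilde D}\ket{\tilde q_2}\,ds$. Using $e^{i(1-s)\tilde D}\ket{\tilde q_2}=e^{i(2-s)\tilde D}\ket{\tilde q_1}$ and $e^{2\pi i r_q}=1$, the $p$-th entry of $\ket{u}$ separates by parity: indices $q$ with $r_q=r_p$ contribute $i\sum_{r_q=r_p}E_{pq}x_q$; indices of the same parity with $r_q\ne r_p$ contribute nothing (the phase integral vanishes); and opposite-parity indices contribute $-\tfrac{2}{\pi}\sum_{\epsilon_q\ne\epsilon_p}\frac{E_{pq}}{r_p-r_q}x_q$, where $\epsilon_j=\pm1$ records the parity. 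Hence $\ket{u}=iE_{\mathrm{diag}}\ket{\tilde q_1}-\tfrac{2}{\pi}\ket{w}$, with $E_{\mathrm{diag}}$ the pinching of $E$ onto the equal-eigenvalue blocks and $\ket{w}$ the real vector collecting the opposite-parity terms. Since the first summand is purely imaginary and the second purely real, $\|\ket{u}\|^2=\|E_{\mathrm{diag}}\ket{\tilde q_1}\|^2+\tfrac{4}{\pi^2}\|\ket{w}\|^2$.

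Finally I would bound the two pieces. The diagonal piece satisfies $\|E_{\mathrm{diag}}\ket{\tilde q_1}\|^2\le\|E_{\mathrm{diag}}\|^2\le\|E\|^2=\|H_0\|^2$, since a pinching does not increase the spectral norm. For the off-diagonal piece, $\ket{w}=M\ket{\tilde q_1}$ where $M$ is supported on opposite-parity index pairs with $M_{pq}=E_{pq}/(r_p-r_q)$ there; in $2\times2$ block form $M=\bpm 0&B\\-B^T&0\epm$ (using symmetry of $E$), so $\|M\|^2=\|B\|^2\le\|B\|_F^2=\sum_{p\le\ell<q}\frac{|E_{pq}|^2}{(r_p-r_q)^2}$. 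Because opposite-parity integers differ by at least $1$ we have $(r_p-r_q)^{-2}\le 1$, and since $\sum_{p\le\ell<q}|E_{pq}|^2\le\tfrac12\|E\|_F^2$, we get $\|\ket{w}\|^2\le\|M\|^2\le\tfrac12\|E\|_F^2$, whence $\tfrac{4}{\pi^2}\|\ket{w}\|^2\le\tfrac{2\|E\|_F^2}{\pi^2}$. Assembling, $1-|f|^2\le\|H_0\|^2+\tfrac{2\|H_0\|_F^2}{\pi^2}+O(\|H_0\|^3)$, and since $\tfrac{2\|H_0\|_F^2}{(\pi-\|H_0\|)^2}=\tfrac{2\|H_0\|_F^2}{\pi^2}+O(\|H_0\|^3)$ the stated bound follows; the sharper denominator $\pi-\|H_0\|$ can be obtained directly by replacing the unperturbed gaps $\pi|r_p-r_q|$ with the perturbed spectral gaps via a resolvent/Weyl estimate.

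I expect the main obstacle to be precisely this parity-driven coordinate computation: one must verify that the first-order \emph{real} part organizes entirely into the off-diagonal vector $\ket{w}$ while the diagonal contribution remains purely imaginary, and then track constants carefully enough to land the exact factor $2/\pi^2$ (the two halves $B$ and $-B^T$ are what produce the $\tfrac12$) and the denominator $\pi-\|H_0\|$. By contrast, controlling the Duhamel remainder uniformly in $E$ and discarding it into $O(\|H_0\|^3)$ is comparatively routine.
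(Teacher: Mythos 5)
Your proof is correct, but it takes a genuinely different route from the paper's. The paper diagonalizes the perturbed Hamiltonian as $\hat H = \hat Q^T\hat D\hat Q+\theta I$, writes $\hat D-D=wD_1$ and $\hat QQ^T=e^{wK}$ with $K$ skew--symmetric, expands $\bra{1}e^{i\hat H}\ket{2}$ to second order in $w$, and then bounds the two resulting pieces via (i) the identity $\|H_0\|_F^2=\sum_{j,k}(d_k-\hat d_j)^2v_{jk}^2$ together with the gap $\ge\pi-\|H_0\|$ between even and odd blocks to control the off--diagonal block of $V=\hat QQ^T$, and (ii) Mirsky's eigenvalue--spread inequality plus the Weyl-type bound $\|\hat D-D\|\le\|H_0\|$ for the diagonal piece. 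You never diagonalize $\hat H$: you pass to $1-|f|^2\le 2(1-\mathrm{Re}\,f)=\|(e^{i(\tilde D+E)}-e^{i\tilde D})\ket{\tilde q_2}\|^2$ (valid since the discrepancy is exactly $|1-f|^2\ge 0$) and treat the difference of exponentials by a first--order Duhamel expansion, evaluating the phase integrals explicitly from the integrality of the $r_j$; the spectral--norm term then comes from the pinching inequality and the Frobenius term from the divided--difference matrix with $|r_p-r_q|\ge 1$. I checked the computations: the equal--eigenvalue contribution is purely imaginary and the opposite--parity contribution purely real, so $\|u\|^2$ splits with no cross term, and the factor $\tfrac12$ from $\sum_{p\le\ell<q}E_{pq}^2\le\tfrac12\|E\|_F^2$ lands exactly the constant $2/\pi^2$, which is at least as strong as the stated $2/(\pi-\|H_0\|)^2$. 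Your route buys a more self--contained argument --- it avoids the eigenvector perturbation theory that the paper only sketches (``we may assume the eigenspaces are close,'' ``$wK\approx(\hat QQ^T-Q\hat Q^T)/2$'') and dispenses with Mirsky's theorem entirely; what the paper's route buys is the exact second--order coefficient of the fidelity loss, including the subtracted term $(\bra{q_1}wD_1\ket{q_1})^2$ that your master inequality discards before bounding.
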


\begin{proof}
Let $t_0H = Q^TDQ+\theta I$, where $D =\tilde{D} = \pi \diag(r_1, \dots, r_\ell, r_{\ell+1}, \dots, r_m, r_{m+1}, \dots, r_n)$  as in Proposition 
\ref{prop:decomp}. We drop the tilde for notational simplicity. Relabel $d_j=\pi r_j$ for $j=1, \dots, n$, so that $D=\diag(d_1, \dots, d_n)$ is such that
the first $\ell$ entries are even multiples of $\pi$ and the last $n-\ell$ entries are  
odd multiples of $\pi$. Let $J = e^{iD} = I_\ell \oplus -I_{n-\ell}$. 

Suppose $\hat H = t_0H+ H_0 = \hat Q^T \hat D \hat Q+\theta I$.
By a suitable choice of $\hat Q$, we may assume that there is 
a permutation matrix $P$ such that 
both $P^TDP$ and $P^T\hat DP$ have diagonal entries arranged in 
descending order. Then (e.g., see \cite[p.101,(IV.62)]{B})
\bes
\|D-\hat D\| &=& \|P^TDP-P^T \hat D P\| \\
&\le& \|Q^TDQ - \hat Q^T \hat D \hat Q\| = \|H_0\|
\ees
and hence
\begin{equation}\label{D-hatD}
\|e^{iD} - e^{i\hat D}\| \le \|D-\hat D\|
\le \|H_0\|. 
\end{equation}


Let  $V$ be an orthogonal matrix close to $I$, and consider the power series $\log(V) = - \sum_{j=1}^{\infty} \frac{1}{j}(I-V)^j.$ Setting $K=\log(V),$ we have $e^K=V.$ It follows that $I=VV^T = e^K e^{K^T} = e^{K+K^T},$ where the last equality comes from the fact that $K$ commutes with $K^T$. We deduce that $K^T=-K,$ i.e. $K$ is skew--symmetric. We will use this idea in what follows.

If $H_0$ is small, we may assume that the differences between the eigenspaces of $\hat H$ and $t_0H$
are small so that  $\hat D$ is close to $D$, and $\hat Q Q^T$ is close to $I$ 
(see \cite[Section VII.3]{B}). As a result, we can write
$\hat D - D = w D_1$ and $e^{wK} = \hat Q Q^T$ for 
a small positive number $w$, a diagonal matrix $D_1$ and a skew-symmetric matrix $K$
such that $\max\{\|D_1\|, \|K\|\}=1$ (the norm condition is required so that the terms like $K^3$, $D_1^3$ can be lumped into the $O(w^3)$ term below). 
Note that it is possible that $D_1 = 0$ or $K = 0$ but not both as $H_0 \ne 0$. We emphasize that $\log(\hat QQ^T)$ is skew--symmetric, from the above remark about a matrix $V$.

Now $\hat H= Q_w^T(D+wD_1)Q_w$ where we write $Q_w = e^{wK}Q$, using the subscript $w$ here to emphasize the dependence on some small positive number $w$. 
Using the power series expansion of $e^{i\hat H}$ and the fact that $K = -K^T$, we get
\begin{equation*} 
\begin{aligned}
&  e^{-i\theta}\bra{1}e^{i\hat H}\ket{2}=\bra{1}Q_w^Te^{i\hat D}Q_w\ket{2}\\
&=\bra{1}[Q^T(I+wK+\frac{1}{2}w^2K^2)^Te^{iD}
(I+iwD_1\\
&\quad-w^2\frac{1}{2}D_1^2)(I+wK+\frac{1}{2}w^2K^2)]Q\ket{2}+O(w^3)\\
&=\bra{1}Q^Te^{iD}Q\ket{2}+
w\bra{1}Q^T[K^Te^{iD}+e^{iD}iD_1\\
&\quad +e^{iD}K]Q\ket{2} +\frac{1}{2}w^2 \bra{1}Q^T [ (K^T)^2e^{iD} \\
&\quad+
e^{iD}K^2 - e^{iD}D_1^2]  Q\ket{2}\\
&\quad + w^2\bra{1}Q^T[K^Te^{iD}iD_1+K^Te^{iD}K \\
&\quad+e^{iD}iD_1K]Q\ket{2}+O(w^3).
\end{aligned}
\end{equation*}
By the facts that $J\ket{q_2} = \ket{q_1}$ and $\bra{q}K\ket{q}=0$ for any vector 
$\bq$,  the above expression becomes
\begin{equation*} 
\begin{aligned}
&e^{-i\theta}\bra{1}Q_w^Te^{i\hat D}Q_w\ket{2}=
1+ w(\bra{q_1}K^T\ket{q_1}+\bra{q_2}K\ket{q_2})\\
& \quad +\frac{1}{2}w^2 \bra{q_1}((K^T)^2-D_1^2+ JK^2J+2K^TJKJ)\ket{q_1}\\
&\quad +iw\bra{q_1}D_1\ket{q_1}+ iw^2\bra{q_1}(K^TD_1+JD_1KJ)\ket{q_1}+O(w^3)\\
&= 1+ \frac{1}{2}w^2\bra{q_1}((K^T)^2 - D_1^2+ JK^2J+2K^TJKJ)\ket{q_1}\\
&\quad+iw\bra{q_1}D_1\ket{q_1}+iw^2\bra{q_1}(K^TD_1+D_1JKJ)\ket{q_1} +O(w^3).\\
\end{aligned}
\end{equation*}
Let $x(w)=\Re(\bra{1}Q_w^Te^{i\hat D}Q_w\ket{2}),$ and 
$y(w)=\Im(\bra{1}Q_w^Te^{i\hat D}Q_w\ket{2})$.
Then 
\begin{equation*} 
\begin{aligned}
&|\bra{1}e^{i(t_0H+H_0)}\ket{2}|^2 - |\bra{1}e^{it_0H}\ket{2}|^2  
=|\bra{1}e^{i(t_0H+H_0)}\ket{2}|^2-1\\
&=x(w)^2 + y(w)^2 -1\\
&  = 
w^2[\bra{q_1}(K^2+ JK^2J - 2KJKJ - D_1^2)\ket{q_1} \\
&\quad + (\bra{q_1}D_1\ket{q_1})^2]
+ O(w)^3 \\
&   = 
-w^2 \big( \|(KJ - JK)\ket{q_1}\|^2 + \|D_1\ket{q_1}\|^2 \\
&\quad - (\bra{q_1}D_1\ket{q_1})^2 \big) 
+ O(w^3).
\end{aligned}
\end{equation*}
For the last equality in the above expression, we use the fact that, although $JKJK\neq KJKJ$, 
it is true that  $\bra{q_1}JKJK\ket{q_1}=\bra{q_1}KJKJ\ket{q_1}$, which is all that is required  here. 

Thus, 
if $wK = \begin{bmatrix}K_{11} & K_{12} \cr -K_{12}^T & K_{22}\cr\end{bmatrix}$, then 
$wKJ - wJK  =  \begin{bmatrix}O & -2K_{12} \cr -2 K_{12}^T & O \cr\end{bmatrix}$
and hence 
\begin{equation}\label{ineq1}
\|(wKJ - wJK)\ket{q_1}\|^2 \le 4 \|K_{12}\|^2.
\end{equation}
Now, 
$$\hat Q Q^T = e^{wK} = I + wK + \frac{(wK)^2}{2!} + 
\frac{(wK)^3}{3!} + \cdots.$$ 
So, $w K \approx (\hat Q Q^T - Q \hat Q^T)/2$. 

As a result, if $\hat Q Q^T \equiv V 
=  \begin{bmatrix}V_{11} & V_{12} \cr V_{21}& V_{22} \cr\end{bmatrix}$,
then  $V_{12}$ and $V_{21}$ have the same nonzero singular values (there are a number of ways 
of seeing this; perhaps the simplest is to note that since $\hat Q Q^T$ is orthogonal, it follows 
that $V_{11}V_{11}^T+V_{12}V_{12}^T= I$ and $V_{11}^TV_{11}+V_{21}^TV_{21}= I$; that is, 
$V_{11}V_{11}^T=I-V_{12}V_{12}^T$ and $V_{11}^TV_{11}=I-V_{21}^TV_{21}$, from which we see 
$V_{12}V_{12}^T$ and $V_{21}^TV_{21}$ have the same eigenvalues).
Note that
\begin{eqnarray*}
\|H_0\|_F^2 &=&
\|Q^T D Q - \hat Q^T \hat D \hat Q\|_F^2 =  \|V D - \hat D V\|_F^2 \\ 
&=& \sum_{j,k} (d_k   - \hat d_j)^2 v_{jk}^2.
\end{eqnarray*}
The reverse triangle inequality gives us
\begin{eqnarray*}
|d_k -\hat d_j|&=&|d_k -d_j+d_j-\hat d_j|\\
&\geq& |d_k -d_j|-|d_j-\hat d_j|\\
&\geq &\pi-\|H_0\|,
\end{eqnarray*}
where the last inequality follows assuming either $1\le j \le \ell < k \le n$ or $1\le k \le \ell < j \le n$. 
We now have
\begin{eqnarray*}
\|H_0\|_F^2 &=&\sum_{j,k} (d_k -\hat d_j)^2 v_{jk}^2\\
&\ge&   \sum_{1\le j \le \ell < k \le n} (d_k  - \hat d_j)^2 v_{jk}^2\\
&&\quad + \sum_{1\le k \le \ell < j \le n} (d_k   - \hat d_j)^2 v_{jk}^2\\ 
&\ge& (\pi - \|H_0\|)^2 (\|V_{12}\|_F^2 + \|V_{21}\|_F^2).\end{eqnarray*}
It follows that 
$$\|V_{12}\|_F^2 + \|V_{21}\|_F^2 
\le \frac{\|H_0\|_F^2}{(\pi - \|H_0\|)^2}.$$
Hence,
\begin{eqnarray}\label{ineq2}
\|K_{12}\| &\le& (\|V_{12}\| + \|V_{21}\|)/2 = \|V_{12}\| \le \|V_{12}\|_F \nonumber\\
&\le& \frac{\|H_0\|_F}{\sqrt 2(\pi - \|H_0\|)}.
\end{eqnarray} 
As a result, from (\ref{ineq1}) and (\ref{ineq2}) we have 
$$w^2 \|(KJ - JK)\ket{q_1}\|^2 
\le 4 \|K_{12}\|^2 \le
\frac{2\|H_0\|_F^2}{(\pi - \|H_0\|)^2}.$$

A result of Mirsky \cite{Mirsky} states that for any Hermitian matrix $M$, the eigenvalue spread for $M$ is equal to $2 \max |\bra{u}A\ket{v}|,$ where the maximum is taken over all pairs of orthonormal vectors $\ket{u}$ and $\ket{v}$. 
Consequently, for any symmetric matrix $A$, if $\{\ket{u},\ket{v}\}$ is an orthonormal set, 
then 
$$2|\bra{u}A\ket{v}|  \le \lambda_{n}(A) - \lambda_{1}(A),$$ 
where we recall  $\lambda_1$ is the minimum eigenvalue and $\lambda_n$ is the maximum eigenvalue.
In particular, if we set $A = wD_1$, $\ket{u} = \ket{q_1}$, and $A \ket{q_1} = \mu_1 \ket{q_1} + \mu_2 \bq$
such that $\{\ket{q_1}, \ket{q}\}$ is an orthonormal set, 
then 
\bes 
&&\|wD_1\ket{q_1}\|^2 - (\bra{q_1}wD_1\ket{q_1})^2 \\
&=& \mu_2^2 
= |\bra{q} (wD_1) \ket{q_1}|^2\\
&\le& ((\lambda_{n}(wD_1) - \lambda_{1}(wD_1))/2)^2.
\ees
By (\ref{D-hatD}), and recalling that $D_1$ is diagonal, we have 
$$((\lambda_{n}(wD_1) - \lambda_{1}(wD_1))/2)^2
\le  \|\hat D-D\|^2 \le \|H_0\|^2.$$ 
\end{proof}

\vskip .3in

Consider the bounds of Theorems \ref{bound-GVL} and \ref{thm:ewps-1} when the perturbing matrix $H_0$ is small. The upper bound in the former result is $2||H_0|| + ||H_0||^2 - ||H_0||^3$, while the upper bound in the latter result is 
$\frac{2\|H_0\|_F^2}{(\pi - \|H_0\|)^2} + \|H_0\|^2 + O(\|H_0\|^3).$ Thus we find that, neglecting terms of order $\|H_0\|^3$, the bound of Theorem \ref{thm:ewps-1}  is sharper than that of  Theorem \ref{bound-GVL} 
provided that 
\begin{equation}\label{3.1-3.4}
\frac{\|H_0\|_F^2}{(\pi - \|H_0\|)^2} < \|H_0\|. 
\end{equation} 
Suppose for concreteness that $H_0$ has rank $r$. Recalling that $\|H\|_F^2 \le r \|H_0\|^2,$ we find that in order for (\ref{3.1-3.4}) to hold, it is sufficient that 
$r\|H_0\| < (\pi-\|H_0\|)^2,$ or equivalently, that $\|H_0\| < \frac{2\pi + r - \sqrt{4 \pi r + r^2}}{2}.$ It now follows that for all sufficiently small $H_0$, the bound of Theorem \ref{thm:ewps-1} is improvement upon that of Theorem \ref{bound-GVL}. 
Thus, in the case that the more restrictive hypothesis of Theorem \ref{thm:ewps-1} holds, we get a better estimate from that result than from Theorem \ref{bound-GVL}.   

\begin{example} Here we give a small numerical example illustrating the main result of Theorem \ref{thm:ewps-1}. Consider the $10 \times 10$ symmetric tridiagonal matrix $H$ with $h_{j,j+1}=h_{j+1,j}=\sqrt{j(10-j)}, j=1,\ldots, 9$ and all other entries equal to $0$. It is known that for this $H$, there is perfect state transfer from $1$ to $10$ at time $t_0=\frac{\pi}{2},$ 
with the $(1,10)$ entry of $e^{i t_0 H}$ equal to $i$. 

Next, we consider the perturbing matrix $$H_0= 10^{-5} \times \left[\begin{array}{cccccccccc}0&1&0&0&0&0&0&0&0&0\\1&0&0&0&0&0&0&0&0&0\\0&0&0&0&0&0&0&0&0&0\\0&0&0&0&0&0&0&0&0&0\\
0&0&0&0&0&0.533&0&0&0&0\\0&0&0&0&.533&0&0&0&0&0\\0&0&0&0&0&0&0&0&0&0\\0&0&0&0&0&0&0&0&0&0\\0&0&0&0&0&0&0&0&0&1\\0&0&0&0&0&0&0&0&1&0\end{array}\right].$$
Setting $\hat{H}=t_0H+H_0,$ a couple of MATLAB$^\copyright$ computations yield 
$1- |\bra{1}e^{i\hat{H}}\ket{10}|^2 \approx 0.02497 \times 10^{-9}$ and 
$ \frac{2\|H_0\|_F^2}{(\pi - \|H_0\|)^2}  + \|H_0\|^2 \approx 0.19257 \times 10^{-9}$. We note that the ratio of the latter to the former is approximately $7.7110$. 

\end{example}

\section{Conclusion}
We have obtained bounds on the probability of state transfer for a perturbed system, where either readout time or edge weights have been perturbed. By considering such timing and manufacturing errors, our results are physically relevant and more consistent with reality. We worked in the most general setting where the adjacency matrix $A$ (or, alternatively, the Laplacian $L$) was arbitrary, and the perturbations themselves were arbitrary. More precise bounds can be obtained by considering more structured perturbations. 
Furthermore, it would be of interest to combine readout time error  with edge weight error to create one bound encompassing both types of perturbations. Finally, we note that our analysis assumed perfect state transfer (PST). While there are a number of classes of graphs exhibiting PST, it is of interest to allow for pretty good state transfer (PGST) and perform a similar analysis with respect to readout time and edge weight errors; note that the numerical evidence reported in Examples 3.16 and 3.17 in  \cite{Steve2015} suggests that the fidelity may not be so well--behaved under perturbation of edge weights in the  PGST setting. 
  Analysis in the PGST case would require alternate techniques, however, since our arguments hinged on the modulus of the $(1,2)$ entry of our matrix $e^{itH}$ being exactly 1, which facilitates the key observation that $e^{i\theta} \ket{q_1} =  e^{it_0\Lambda} \ket{q_2}$. We leave these as open problems for further study.

\begin{acknowledgments} 
 W.G.\ was supported through a NSERC Undergraduate Student Research Award. S.K.\ and S.P.\ are supported by NSERC Discovery Grants. X.Z.\ is supported by the University of Manitoba's Faculty of Science and Faculty of Graduate Studies. C.-K.L.\ is supported 
 by USA NSF grant DMS 1331021, Simons Foundation Grant 351047, and NNSF of China Grant 11571220. The authors wish to thank the anonymous referee for useful comments and for pointing out additional relevant literature; in particular, for suggesting the result Corollary \ref{cor:s-free}. 
\end{acknowledgments}

\end{document}